\let\MYcaption\@makecaption
\newcommand{\ov}[1]{\overline{#1}}
\newcommand{\eqdef}{\stackrel{\mbox{\scriptsize def}}{=}}
\newcommand{\reals}{\mathbb{R}}
\newcommand{\realsnng}{\mathbb{R}^{\ge0}}
\newcommand{\cala}{\mathcal{A}}
\newcommand{\cald}{\mathcal{D}}
\newcommand{\calf}{\mathcal{F}}
\newcommand{\calo}{\mathcal{O}}
\newcommand{\calr}{\mathcal{R}}
\newcommand{\cals}{\mathcal{S}}
\newcommand{\calw}{\mathcal{W}}
\newcommand{\supp}{{\mathtt{supp}}}
\newcommand{\Dists}{\mathbb{D}} 
\newcommand{\randassign}{\ensuremath{\stackrel{\mathrm{\$}}{\leftarrow}}}
\newcommand{\diverge}[2]{\mathit{D}(#1 \parallel #2)}
\newcommand{\renyidiverge}[3]{\mathit{D}_{#3}(#1 \parallel #2)}
\newcommand{\Dtv}[0]{\mathit{D}_{\sf tv}}
\newcommand{\tvdiverge}[2]{\renyidiverge{#1}{#2}{\sf tv}}
\newcommand{\Winfu}{\mathit{W}_{d}}
\newcommand{\StatEL}{\mbox{\rm StatEL}}
\renewcommand{\phi} {\varphi}
\newcommand{\erightarrow}{\supset}
\newcommand{\hy}{\hat{y}}
\newcommand{\hell}{\hat{\ell}}
\newcommand{\Var}{\mathtt{Mes}}
\newcommand{\Label}{\mathtt{L}}
\newcommand{\PR}[1]{\mathop{\mathbb{P}_{#1}}}
\newcommand{\Qw}[1]{\mathop{\mathbb{Q}_{#1}}}
\newcommand{\MKa}{\mathop{\mathsf{K}_{a}}}
\newcommand{\MKe}{\mathop{\mathsf{K}_{\varepsilon}}}
\newcommand{\MKeD}{\mathop{\mathsf{K}_{\varepsilon}^{\!D}}}
\newcommand{\MPa}{\mathop{\mathsf{P}_{\!a}}}
\newcommand{\MPe}{\mathop{\mathsf{P}_{\!\varepsilon}}}
\newcommand{\MPetv}{\mathop{\mathsf{P}_{\varepsilon}^{\sf tv}}}
\newcommand{\MPztv}{\mathop{\mathsf{P}_{0}^{\sf tv}}}
\newcommand{\MPelps}{\mathop{\mathsf{P}_{\varepsilon}^{r,D}}}
\newcommand{\M}{\mathfrak{M}}
\newcommand{\Ra}[0]{\calr_{a}}
\newcommand{\Reps}[0]{\calr_{\!\varepsilon}}
\newcommand{\Repsd}[0]{\calr_{\!\varepsilon}^{\!D}}
\newcommand{\Repstv}[0]{\calr_{\!\varepsilon}^{\sf tv}}
\newcommand{\Repslps}[0]{\calr_{\!\varepsilon}^{r,D}}
\newcommand{\wre}[0]{\mathit{w_{\sf real}}}
\newcommand{\TP}{\mathit{tp}}
\newcommand{\TN}{\mathit{tn}}
\newcommand{\FP}{\mathit{fp}}
\newcommand{\FN}{\mathit{fn}}
\newcommand{\Prevalence}{\mathsf{Prevalence}}
\newcommand{\Accuracy}{\mathsf{Accuracy}}
\newcommand{\Precision}{\mathsf{Precision}}
\newcommand{\FDR}{\mathsf{FDR}}
\newcommand{\FOR}{\mathsf{FOR}}
\newcommand{\NPV}{\mathsf{NPV}}
\newcommand{\Recall}{\mathsf{Recall}}
\newcommand{\FallOut}{\mathsf{FallOut}}
\newcommand{\MissRate}{\mathsf{MissRate}}
\newcommand{\Specificity}{\mathsf{Specificity}}
\newcommand{\TotalRobust}{\mathsf{TotalRobust}}
\newcommand{\TargetRobust}{\mathsf{TargetRobust}}
\newcommand{\GrpFair}{\mathsf{GrpFair}}
\newcommand{\IndFair}{\mathsf{IndFair}}
\newcommand{\EqOpp}{\mathsf{EqOpp}}
\newif\ifcommentson\commentsonfalse
\newif\ifconferenceon\conferenceonfalse
\newcommand{\arxiv}[1]{}
\newcommand{\conference}[1]{#1}
\newcommand{\conferenceShort}[1]{}
\newcommand{\arxiv}[1]{#1}
\newcommand{\conference}[1]{}
\newcommand{\conferenceShort}[1]{}
\newcommand{\journal}[1]{}
\newcommand{\commentsize}[0]{.90\textwidth}
\newcommand{\commentYK}[1]{\begin{center} \parbox{\commentsize}{\textbf{\textcolor{black}{Comment Y.}} \textcolor{red}{#1} }\end{center}}
\newcommand{\replyYK}[1]{\begin{center} \parbox{\commentsize}{\textbf{Reply Y.} \textcolor{blue}{#1} }\end{center}}
\newcommand{\commentY}[1]{\marginpar{\footnotesize \color{red} {\bf Y:} \textsf{\scriptsize #1}}}
\newcommand{\replyY}[1]{\marginpar{\footnotesize \color{red} {\bf Y:} \textsf{\scriptsize #1}}}
\newcommand{\commentYK}[1]{}
\newcommand{\replyYK}[1]{}
\newcommand{\commentY}[1]{}
\newcommand{\replyY}[1]{}
\newcommand{\colorR}[1]{\textcolor{red}{#1}}
\newcommand{\pagelimitmarker}[1]{~\\ {\colorR{\ifthenelse{\thepage>#1}{\Huge Exceeding the page limit}{\huge Within the page limit}}}~\\ {\huge{\colorR{~~Page Limit\,\,\,\,\, = #1}}}~\\ {\huge{\colorR{~~Current Page = $\thepage$}}}}
\begin{document}
\title{Towards Logical Specification of Statistical Machine Learning
\thanks{This work was supported by JSPS KAKENHI Grant Number JP17K12667, by the New Energy and Industrial Technology Development Organization (NEDO), and by Inria under the project LOGIS.}}
\author{Yusuke Kawamoto\inst{1}\orcidID{0000-0002-2151-9560}}
\authorrunning{Y. Kawamoto}
\institute{AIST, Tsukuba, Japan \\~
\conference{\email{yusuke.kawamoto.aist@gmail.com}}
}
\maketitle              
\begin{abstract}
We introduce a logical approach to formalizing statistical properties of machine learning.
Specifically, we propose a formal model for statistical classification based on a Kripke model, and formalize various notions of classification performance, robustness, and fairness of classifiers by using epistemic logic.
Then we show some relationships among properties of classifiers and those between classification performance and robustness, which suggests robustness-related properties that have not been formalized in the literature as far as we know.
To formalize fairness properties, we define a notion of counterfactual knowledge and show techniques to formalize conditional indistinguishability by using counterfactual epistemic operators.
As far as we know, this is the first work that uses logical formulas to express statistical properties of machine learning, and that provides epistemic (resp. counterfactually epistemic) views on robustness (resp. fairness) of classifiers.

\keywords{Epistemic logic
\and Possible world semantics 
\and Divergence 
\and Machine learning 
\and Statistical classification
\and Robustness
\and Fairness}
\end{abstract}

\section{Introduction}
\label{sec:intro}
With the increasing use of machine learning in real-life applications, 
the safety and security of learning-based systems have been of great interest.
In particular, many recent studies~\cite{Szegedy:14:ICLR,Chakraborty:18:arxiv} have found vulnerabilities on the robustness of deep neural networks (DNNs) to malicious inputs, which can lead to disasters in security critical systems, such as self-driving cars.
To find out these vulnerabilities in advance, there have been researches on the formal verification and testing methods for the robustness of DNNs in recent years~\cite{Huang:17:CAV,Katz:17:CAV,Pei:17:SOSP,Tian:18:ICSE}.
However, relatively little attention has been paid to the formal specification of machine learning~\cite{Seshia:18:ATVA}.

To describe the formal specification of security properties, logical approaches have been shown useful to classify desired properties and to develop theories to compare those properties.
For example, security policies in temporal systems have been formalized as trace properties~\cite{Alpern:85:IPL} or hyperproperties~\cite{Clarkson:08:CSF}, which characterize the relationships among various security policies.
For another example, epistemic logic~\cite{vonWright:51:book} has been widely used as formal policy languages (e.g., for the authentication~\cite{Burrows:90:TOCS} and the anonymity~\cite{Syverson:99:FM,Halpern:05:JCS} of security protocols, and for the privacy of social network~\cite{Pardo:14:SEFM}).
As far as we know, however, no prior work has employed logical formulas to rigorously describe various statistical properties of machine learning, although there are some papers that (often informally) list various desirable properties of machine learning~\cite{Seshia:18:ATVA}.

In this paper, we present a first logical formalization of statistical properties of machine learning.
To describe the statistical properties in a simple and abstract way, we employ \emph{statistical epistemic logic} (\StatEL{})~\cite{Kawamoto:19:FC:arxiv}, which is recently proposed to describe statistical knowledge and is applied to formalize statistical hypothesis testing and statistical privacy of databases.

A key idea in our modeling of statistical machine learning is that we formalize logical properties in the syntax level by using logical formulas, and statistical distances in the semantics level by using accessibility relations of a Kripke model~\cite{Kripke:63:MLQ}.
In this model, we formalize statistical classifiers and some of their desirable properties: classification performance, robustness, and fairness.
More specifically, classification performance and robustness are described as the differences between the classifier's recognition and the correct label (e.g., given by the human), whereas fairness is formalized as the conditional indistinguishability between two groups or individuals by using a notion of counterfactual knowledge.

\paragraph{Our contributions.}
The main contributions of this work are as follows:
\begin{itemize}
\item We show a logical approach to formalizing statistical properties of machine learning in a simple and abstract way.
In particular, we model logical properties in the syntax level, and statistical distances in the semantics level.
\item We introduce a formal model for statistical classification.
More specifically, we show how probabilistic behaviors of classifiers and non-deterministic adversarial inputs are formalized in a distributional Kripke model~\cite{Kawamoto:19:FC:arxiv}.
\item We formalize the classification performance, robustness, and fairness of classifiers by using statistical epistemic logic (\StatEL{}).
As far as we know, this is the first work that uses logical formulas to formalize various statistical properties of machine learning, and that provides epistemic (resp. counterfactually epistemic) views on robustness (resp. fairness) of classifiers.
\item We show some relationships among properties of classifiers, e.g., different strengths of robustness.
We also present some relationships between classification performance and robustness, which suggest robustness-related properties that have not been formalized in the literature as far as we know.
\item To formalize fairness properties, we define a notion of certain counterfactual knowledge and show techniques to formalize conditional indistinguishability by using counterfactual epistemic operators in \StatEL{}.
This enables us to express various fairness properties in a similar style of logical formulas.
\end{itemize}

\paragraph{Cautions and limitations.}
In this paper, we focus on formalizing properties of classification problems and do not deal with the properties of learning algorithms (e.g., fairness through unawareness of sensitive attributes in data preparation), quality of training data (e.g., sample bias), quality of testing (e.g., coverage criteria), explainability, temporal properties, system level specification, 
or process agility in system development.
It should be noted that most of the properties formalized in this paper have been known in literatures on machine learning, and the novelty of this work lies in the logical formulation of those statistical properties.

We also remark that this work does not provide methods for checking, guaranteeing, or improving the performance/robustness/fairness of machine learning.
As for the satisfiability of logical formulas,
we leave the development of testing and (statistical) model checking algorithms as future work, since the research area on the testing and formal/statistical verification of machine learning is relatively new and needs further techniques to improve the scalability.
Moreover, in some applications such as image recognition, some formulas (e.g., representing whether an input image is panda or not) cannot be implemented mathematically, and require additional techniques based on experiments.
Nevertheless, we demonstrate that describing various properties using logical formulas is useful to explore desirable properties and to discuss their relationships in a framework.

Finally, we emphasize that our work is the first attempt to use logical formulas to express statistical properties of machine learning, and would be a starting point to develop theories of specification of machine learning in future research.

\paragraph*{Paper organization.}
The rest of this paper is organized as follows.
Section~\ref{sec:preliminaries} presents background on statistical epistemic logic (\StatEL{}) and notations used in this paper.
Section~\ref{sec:counterfactual} defines counterfactual epistemic operators and shows techniques to model conditional indistinguishability using \StatEL{}.
Section~\ref{sec:formal:ML} introduces a formal model for describing the behaviours of statistical classifiers and non-deterministic adversarial inputs.
Sections~\ref{sec:ML:performance-prediction}, \ref{sec:ML:robustness}, and~\ref{secML:fairness} respectively formalize the classification performance, robustness, and fairness of classifiers by using \StatEL{}.
Section~\ref{sec:related} presents related work and Section~\ref{sec:conclude} concludes.

\section{Preliminaries}
\label{sec:preliminaries}
In this section we introduce some notations and recall the syntax and semantics of the \emph{statistical epistemic logic} (\StatEL{}) introduced in~\cite{Kawamoto:19:FC:arxiv}.

\subsection{Notations}
\label{subsec:notations-pd}

Let $\realsnng$ be the set of non-negative real numbers,
and $[0, 1]$ be the set of non-negative real numbers not greater than~$1$.
We denote by $\Dists\calo$ the set of all probability distributions over a set~$\calo$. Given a finite set $\calo$ and a probability distribution $\mu\in\Dists\calo$, the probability of sampling a value $y$ from $\mu$ is denoted by $\mu[y]$.
For a subset $R\subseteq\calo$ we define $\mu[R]$ by: $\mu[R] = \sum_{y\in R} \mu[y]$.
For a distribution $\mu$ over a finite set $\calo$, its \emph{support} is defined by 
$\supp(\mu) = \{ v \in \calo \colon \mu[v] > 0 \}$.

The \emph{total variation distance} of two distributions $\mu, \mu' \in \Dists\calo$ is defined by:
$
\tvdiverge{\mu}{\mu'} \eqdef
\sup_{R \subseteq \calo} | \mu(R) - \mu'(R) |
$
{.}

\subsection{Syntax of \StatEL{}}
\label{sub:syntax}

We recall the syntax of the statistical epistemic logic (\StatEL{})~\cite{Kawamoto:19:FC:arxiv}, 
which has two levels of formulas: \emph{static} and \emph{epistemic formulas}.
Intuitively, a static formula describes a proposition satisfied at a deterministic state, while an epistemic formula describes a proposition satisfied at a probability distribution of states.
In this paper, the former is used only to define the latter.

Formally, let $\Var$ be a set of symbols called \emph{measurement variables}, and
$\Gamma$ be a set of atomic formulas of the form $\gamma(x_1, x_2, \ldots, x_n)$ for a predicate symbol $\gamma$, $n \ge 0$, and $x_1, x_2, \ldots, x_n\in\Var$.
Let $I \subseteq [0, 1]$ be a finite union of disjoint intervals, and $\cala$ be a finite set of indices (e.g., associated with statistical divergences).
Then the formulas are defined by:
\begin{itemize}
\item[] Static formulas:~~
$\psi \mathbin{::=}
 \gamma(x_1, x_2, \ldots, x_n) \mid
 \neg \psi \mid \psi \wedge \psi$
\item[] Epistemic formulas:~~
$\phi \mathbin{::=}
 \PR{I} \psi \mid \neg \phi \mid \phi \wedge \phi \mid
 \psi \erightarrow \phi \mid \MKa \phi$
\end{itemize}
where 
$a\in\cala$.
We denote by $\calf$ the set of all epistemic formulas.
Note that we have no quantifiers over measurement variables. 
(See Section~\ref{sub:interpretation} for more details.)

The \emph{probability quantification} $\PR{I} \psi$ represents that a static formula $\psi$ is satisfied with a probability belonging to a set $I$.
For instance, $\PR{(0.95, 1]} \psi$ represents that $\psi$ holds with a probability greater than $0.95$.
By $\psi \erightarrow \PR{I} \psi'$ we represent that the conditional probability of $\psi'$ given $\psi$ is included in a set $I$.
The \emph{epistemic knowledge} $\MKa \phi$ expresses that we knows $\phi$ with a confidence specified by~$a$.

As syntax sugar, we use \emph{disjunction} $\vee$, \emph{classical implication} $\rightarrow$, and \emph{epistemic possibility} $\MPa$, defined as usual by:
$\phi_0 \vee \phi_1 \mathbin{::=} \neg (\neg \phi_0 \wedge \neg \phi_1)$,
$\phi_0 \rightarrow \phi_1 \mathbin{::=} \neg \phi_0 \vee \phi_1$,
and $\MPa{\phi} \mathbin{::=} \neg \MKa \neg \phi$.
When $I$ is a singleton $\{ i \}$, we abbreviate $\PR{I}$ as $\PR{i}$.

\subsection{Distributional Kripke Model}
\label{sub:Kripke}

Next we recall the notion of a distributional Kripke model~\cite{Kawamoto:19:FC:arxiv}, where each possible world is a probability distribution over a set $\cals$ of states and
each world $w$ is associated with a stochastic assignment $\sigma_w$ to measurement variables.

\begin{definition}[Distributional Kripke model] \label{def:dist-Kripke-model} \rm
Let $\cala$ be a finite set of indices (typically associated with statistical tests and their thresholds),
$\cals$ be a finite set of states, and $\calo$ be a finite set of data.
A \emph{distributional Kripke model} 
is a tuple 
$\M =(\calw, (\calr_a)_{a\in\cala}, (V_s)_{s\in\cals})$ 
consisting of:
\begin{itemize}
\item a non-empty set $\calw$ of probability distributions over a finite set $\cals$ of states;
\item for each $a\in\cala$, an accessibility relation $\calr_a \subseteq \calw \times \calw$;
\item for each $s\in\cals$, a valuation $V_s$ that maps each $k$-ary predicate $\gamma$ to a set $V_s(\gamma) \subseteq \calo^k$.
\end{itemize}
The set $\calw$ is called a \emph{universe}, and its elements are called \emph{possible worlds}.
All measurement variables range over the same set $\calo$ in every world.

We assume that each $w\in\calw$ is associated with a function $\rho_w: \Var\times\cals\rightarrow\calo$ that maps each measurement variable $x$ to its value $\rho_w(x, s)$ observed at a state~$s$.
We also assume that each state $s$ in a world $w$ is associated with the assignment $\sigma_s: \Var\rightarrow\calo$ defined by $\sigma_s(x) = \rho_w(x, s)$.
\end{definition}

Since each world $w$ is a distribution of states, we denote by $w[s]$ the probability that a state $s$ is sampled from $w$.
Then the probability that a measurement variable $x$ has a value $v$ is given by
$\sigma_w(x)[v] = 
\sum_{\substack{s\in\supp(w), \sigma_s(x) = v}} w[s]
$.
This implies that, when a state $s$ is drawn from $w$, an input $\sigma_s(x)$ is sampled from the distribution $\sigma_w(x)$.

\subsection{Stochastic Semantics of \StatEL{}}
\label{sub:interpretation}

Now we recall the \emph{stochastic semantics}~\cite{Kawamoto:19:FC:arxiv} for the \StatEL{} formulas over a distributional Kripke model 
$\M =(\calw, (\calr_a)_{a\in\cala}, (V_s)_{s\in\cals})$ 
with $\calw = \Dists\cals$.

The interpretation of static formulas $\psi$ at a state $s$ is given by:
\begin{align*}
s \models \gamma(x_1, x_2, \ldots, x_k) 
& ~\mbox{ iff }~
(\sigma_s(x_1), \sigma_s(x_2), \ldots, \sigma_s(x_k)) \in V_s(\gamma)
\\
s \models \neg \psi 
& ~\mbox{ iff }~
s \not\models \psi 
\\
s \models \psi \wedge \psi'
& ~\mbox{ iff }~
s \models \psi
~\mbox{ and }~
s \models \psi'
{.}
\end{align*}

The \emph{restriction} $w|_\psi$ of a world $w$ to a static formula $\psi$ is defined by
$w|_\psi[s] = \frac{w[s]}{\sum_{s': s' \models \psi} w[s']}$ if $s \models \psi$,
and $w|_\psi[s] = 0$ otherwise.
Note that $w|_\psi$ is undefined if there is no state $s$ that satisfies $\psi$ and has a non-zero probability in $w$.

Then the interpretation of epistemic formulas in a world $w$ is defined by:
\begin{align*}
\M, w \models \PR{I} \psi
& ~\mbox{ iff }~
\Pr\!\left[ s \randassign w :~ s \models \psi \right] \in I
\\
\M, w \models \neg \phi
& ~\mbox{ iff }~
\M, w \not\models \phi
\\
\M, w \models \phi \wedge \phi'
& ~\mbox{ iff }~
\M, w \models \phi
~\mbox{ and }~
\M, w \models \phi'
\\
\M, w \models \psi \erightarrow \phi
& ~\mbox{ iff }~
\mbox{$w|_{\psi}$ is defined and }~
\M, w|_{\psi} \models \phi
\\
\M, w \models \MKa \phi
& ~\mbox{ iff }~
\mbox{for every $w'$ s.t. $(w, w') \in \calr_a$, }~
\M, w' \models \phi
{,}
\end{align*}
where $s \randassign w$ represents that a state $s$ is sampled from the distribution $w$.

Then $\M, w \models \psi_0 \erightarrow \PR{I} \psi_1$ represents that 
the conditional probability of satisfying a static formula $\psi_1$ given another $\psi_0$ is included in a set $I$ at a world~$w$.

In each world $w$, measurement variables can be interpreted using $\sigma_w$. 
This allows us to assign different values to different occurrences of a variable in a formula;
E.g., in $\phi(x) \rightarrow \MKa \phi'(x)$,\, $x$ occurring in $\phi(x)$ is interpreted by $\sigma_{w}$ in a world $w$, while $x$ in $\phi'(x)$ is interpreted by $\sigma_{w'}$ in another $w'$ s.t. $(w, w')\in \calr_a$.

Finally, the interpretation of an epistemic formula $\phi$ in $\M$ is given by:
\begin{align*}
\M \models \phi
& ~\mbox{ iff }~
\mbox{for every world $w$ in $\M$, }~
\M, w \models \phi
{.}
\end{align*}

\section{Techniques for Conditional Indistinguishability}
\label{sec:counterfactual}
In this section we introduce some modal operators to define a notion of ``counterfactual knowledge'' using \StatEL{}, and show how to employ them to formalize conditional indistinguishability properties.
The techniques presented here are used to formalize some fairness properties of machine learning in Section~\ref{secML:fairness}.

\subsection{Counterfactual Epistemic Operators}
\label{sub:counterfactual-epistemic}

Let us consider an accessibility relation $\Reps$ based on a statistical divergence $\diverge{\cdot}{\cdot}: \Dists\calo\times\Dists\calo \rightarrow \realsnng$ and a threshold $\varepsilon\in\realsnng$ defined by:
\[
\Reps \eqdef
\left\{ (w, w') \in \calw\times\calw \mid D(\sigma_{w}(y) \parallel \sigma_{w'}(y)) \le \varepsilon
\right\}
{,}
\]
where $y$ is the measurement variable observable in each world in $\calw$.
Intuitively, $(w, w')\in\Reps$ represents that the probability distribution $\sigma_{w}(y)$ of the data $y$ observed in a world $w$ is \emph{indistinguishable} from that in another world $w'$ in terms of $D$.

Now we define the complement relation of $\Reps$ by $\ov{\Reps} \eqdef (\calw\times\calw) \setminus \Reps$, namely,
\[
\ov{\Reps} =
\left\{ (w, w') \in \calw\times\calw \mid D(\sigma_{w}(y) \parallel \sigma_{w'}(y)) > \varepsilon
\right\}
{.}
\]
Then $(w, w')\in\ov{\Reps}$ represents that the distribution $\sigma_{w}(y)$ observed in $w$ \emph{can be distinguished} from that in $w'$.
Then the corresponding epistemic operator $\ov{\MKe}$, which we call a \emph{counterfactual epistemic operator}, is interpreted as:
\begin{align}
\label{eq:counterfactual1}
\M, w \models \ov{\MKe} \phi
& ~\mbox{ iff }~
\mbox{for every $w'$ s.t. }
(w, w') \in \ov{\Reps},
\mbox{ we have }
\M, w' \models \phi
\\
\label{eq:counterfactual2}
& ~\mbox{ iff }~
\mbox{for every $w'$ s.t. }
\M, w' \models \neg\phi,
\mbox{ we have }
(w, w') \in \Reps
{.}
\end{align}
Intuitively, \eqref{eq:counterfactual1} represents that if we were located in a possible world $w'$ that looked distinguished from the real world $w$, then $\phi$ would always hold.
This means a \emph{counterfactual knowledge}\footnote{Our definition of counterfactual knowledge is limited to the  condition of having an observation different from the actual one.
More general notions of counterfactual knowledge can be found in previous work (e.g.,~\cite{Williamson:07:GPS}).} in the sense that, if we had an observation different from the real world, then we would know $\phi$.
This is logically equivalent to \eqref{eq:counterfactual2}, representing that all possible worlds $w'$ that do not satisfy $\phi$ look indistinguishable from the real world $w$ in terms of $D$.

We remark that the dual operator $\ov{\MPe}$ is interpreted as:
\begin{align}
\label{eq:counterfactual3}
\M, w \models \ov{\MPe} \phi
& ~\mbox{ iff }~
\mbox{there exists a $w'$ s.t. }
(w, w') \notin \Reps
\mbox{ and }
\M, w' \models \phi
{.}
\end{align}
This means a counterfactual possibility in the sense that it might be the case where we had an observation different from the real world and thought $\phi$ possible.

\subsection{Conditional Indistinguishability via Counterfactual Knowledge}
\label{sub:CondIndCounterfactual}

As shown in Section~\ref{secML:fairness}, some fairness notions in machine learning are based on conditional indistinguishability of the form~\eqref{eq:counterfactual2}, hence can be expressed using counterfactual epistemic operators.

Specifically, we use the following proposition, stating that given that two static formulas $\psi$ and $\psi'$ are respectively satisfied in worlds $w$ and $w'$ with probability $1$, then the indistinguishability between $w$ and $w'$ can be expressed as
$w \models \psi \erightarrow \neg \ov{\MPa }\PR{1} \psi'$.
Note that this formula means that there is no possible world where we have an observation different from the real world $w$ (satisfying $\psi$) but we think $\psi'$ possible;
i.e., the formula means that if $\psi'$ is satisfied then we have an observation indistinguishable from that in the real world $w$.

\begin{restatable}[Conditional indistinguishability]{prop}{CondInd}
\label{prop:CondInd}
Let $\M =(\calw, (\calr_a)_{a\in\cala}, \allowbreak (V_s)_{s\in\cals})$ 
be a distributional Kripke model with the universe $\calw = \Dists\cals$.
Let $\psi$ and $\psi'$ be static formulas, and $a\in\cala$.
\begin{enumerate}
\item[(i)] 
$\M \models \psi \erightarrow \neg \ov{\MPa }\PR{1} \psi'$
iff for any $w, w' \in \calw$,\, 
$\M, w \models \PR{1} \psi$ and $\M, w' \models \PR{1} \psi'$
imply $(w, w') \in \Ra$.
\item[(ii)] 
If $\calr_a$ is symmetric, then $\M \models \psi \erightarrow \neg \ov{\MPa }\PR{1} \psi'$ iff $\M \models \psi' \erightarrow \neg \ov{\MPa }\PR{1} \psi$.
\end{enumerate}
\end{restatable}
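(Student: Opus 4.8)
The plan is to unwind the semantics of the counterfactual operator $\ov{\MPa}$ and the conditional operator $\erightarrow$, reducing both parts of the proposition to elementary statements about the accessibility relation $\calr_a$. For part (i), the starting point is that $\M \models \psi \erightarrow \neg \ov{\MPa}\PR{1}\psi'$ means: for every world $w \in \calw$ such that $w|_\psi$ is defined, we have $\M, w|_\psi \models \neg \ov{\MPa}\PR{1}\psi'$. The key observation is that $w|_\psi$ is defined exactly when some state satisfying $\psi$ has positive probability in $w$, and moreover $\M, w|_\psi \models \PR{1}\psi$ always holds by construction of the restriction; conversely, if $\M, w \models \PR{1}\psi$ then $w|_\psi = w$. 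So quantifying over $w|_\psi$ as $w$ ranges over worlds with $w|_\psi$ defined is the same as quantifying over all worlds $v$ with $\M, v \models \PR{1}\psi$. By \eqref{eq:counterfactual3}, $\M, v \models \ov{\MPa}\PR{1}\psi'$ iff there exists $w'$ with $(v, w') \notin \calr_a$ and $\M, w' \models \PR{1}\psi'$; hence $\M, v \models \neg\ov{\MPa}\PR{1}\psi'$ iff for every $w'$ with $\M, w' \models \PR{1}\psi'$ we have $(v, w') \in \calr_a$. Putting these together yields exactly the right-hand side of (i): for all $w, w' \in \calw$, if $\M, w \models \PR{1}\psi$ and $\M, w' \models \PR{1}\psi'$ then $(w, w') \in \calr_a$.

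First I would state and prove the two auxiliary facts about restrictions just mentioned — namely that $\M, w|_\psi \models \PR{1}\psi$ whenever $w|_\psi$ is defined, and that $\M, w \models \PR{1}\psi$ implies $w|_\psi = w$ — since these are the load-bearing steps that let us pass between ``$w|_\psi$ as $w$ varies'' and ``all worlds satisfying $\PR{1}\psi$.'' Then I would rewrite $\neg\ov{\MPa}\PR{1}\psi'$ using \eqref{eq:counterfactual3} and push the negation through the existential quantifier, obtaining the universal statement over worlds satisfying $\PR{1}\psi'$. Chaining the two equivalences gives (i). Part (ii) is then immediate: by (i), the condition $\M \models \psi \erightarrow \neg\ov{\MPa}\PR{1}\psi'$ is equivalent to ``$\M, w \models \PR{1}\psi$ and $\M, w' \models \PR{1}\psi'$ imply $(w,w') \in \calr_a$,'' and the condition $\M \models \psi' \erightarrow \neg\ov{\MPa}\PR{1}\psi$ is equivalent to ``$\M, w \models \PR{1}\psi'$ and $\M, w' \models \PR{1}\psi$ imply $(w,w') \in \calr_a$''; relabeling the bound variables $w \leftrightarrow w'$ and invoking symmetry of $\calr_a$ shows these are the same condition.

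The main obstacle is purely a matter of care rather than depth: correctly handling the ``$w|_\psi$ is defined'' side condition in the semantics of $\erightarrow$, and making sure the quantifier manipulation when negating \eqref{eq:counterfactual3} is done faithfully (the existential-over-$w'$ becomes a universal-over-$w'$ guarded by $\M, w' \models \PR{1}\psi'$, not an unguarded one). A secondary subtlety worth a sentence is that $\ov{\calr_a}$ is defined as the set-theoretic complement $(\calw \times \calw)\setminus \calr_a$, so $(v,w') \notin \ov{\calr_a}$ is literally $(v,w') \in \calr_a$ with no further conditions — this is what makes the collapse to a clean relational statement work. Once these points are pinned down, both parts follow by routine rewriting.
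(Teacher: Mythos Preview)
Your proposal is correct and follows essentially the same route as the paper: unwind the semantics of $\erightarrow$ and $\ov{\MPa}$, use the two restriction facts ($w|_\psi \models \PR{1}\psi$ whenever defined, and $\PR{1}\psi$ forces $w|_\psi = w$) to identify the worlds in play with those satisfying $\PR{1}\psi$, then read off the relational condition; part (ii) is derived from (i) by symmetry exactly as in the paper. Your explicit framing of the bijection $\{\,w|_\psi : w|_\psi \text{ defined}\,\} = \{\,v : \M, v \models \PR{1}\psi\,\}$ is a touch more transparent than the paper's direction-by-direction argument, but the content is the same.
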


See Appendix~\ref{sec:proof:prop:CondInd} for the proof.

\section{Formal Model for Statistical Classification}
\label{sec:formal:ML}
In this section we introduce a formal model for statistical classification by using distributional Kripke models (Definition~\ref{def:dist-Kripke-model}).
In particular, we formalize a probabilistic behaviour of a classifier $C$ and a non-deterministic input $x$ from an adversary in a distributional Kripke model.

\subsection{Statistical Classification Problems}
\label{sub:ML:classifications}

\emph{Multiclass classification} is the problem of classifying a given input into one of multiple classes.
Let $\Label$ be a finite set of \emph{class labels}, and $\cald$ be a finite set of \emph{input data} (called \emph{feature vectors}) that we want to classify.
Then a \emph{classifier} is a function $C: \cald\rightarrow\Label$ that receives an input  datum and predicts which class (among $\Label$) the input belongs to.
Here we do \emph{not} model how classifiers are constructed from a set of training data, but deal with a situation where some classifier $C$ has already been obtained and its properties should be evaluated.

Let $f: \cald\times\Label \rightarrow \reals$ be a \emph{scoring function} that gives a score $f(v, \ell)$ of predicting the class of an input datum (feature vector) $v$ as a label $\ell$.
Then for each input $v\in\cald$, we denote by $H(v) = \ell$ to represent that a label $\ell$ maximizes $f(v, \ell)$.
For example, when the input $v$ is an image of an animal and $\ell$ is the animal's name, $H(v) = \ell$ may represent that an oracle (or ``human'') classifies the image $v$ as~$\ell$.

\subsection{Modeling the Behaviours of Classifiers}
\label{sub:ML:predicates}

Classifiers are formalized on a distributional Kripke model 
$\M =(\calw, (\calr_a)_{a\in\cala}, \allowbreak (V_s)_{s\in\cals})$ 
with $\calw = \Dists\cals$ and a real world $\wre \in \calw$.
Recall that each world $w\in\calw$ is a probability distribution over the set $\cals$ of states and has a stochastic assignment $\sigma_w: \Var \rightarrow \Dists\calo$ that is consistent with the deterministic assignments~$\sigma_s$ for all $s\in\cals$ (as explained in Section~\ref{sub:Kripke}).

We present an overview of our formalization in Fig.~\ref{fig:states}.
We denote by $x\in\Var$ an input datum given to the classifier $C$ (and to the oracle $H$), by $y\in\Var$ a correct label given by the oracle $H$, and by $\hy\in\Var$ a label predicted by $C$.
We assume that the input variable $x$ (resp. the output variables $y,\hy$) ranges over the set $\cald$ of input data (resp. the set $\Label$ of labels);
i.e., the deterministic assignment $\sigma_s$ at each state $s\in\cals$ has the range $\calo = \cald \cup \Label$ and satisfies $\sigma_s(x)\in\cald$ and $\sigma_s(y), \sigma_s(\hy)\in\Label$.

A key idea in our modeling is that we formalize logical properties in the syntax level by using logical formulas, and statistical distances in the semantics level by using accessibility relations $\calr_a$.
In this way, we can formalize various statistical properties of classifiers in a simple and abstract way.

To formalize a classifier $C$, we introduce a static formula $\psi(x, \hy)$ to represent that $C$ classifies a given input $x$ as a class $\hy$.
We also introduce a static formula $h(x, y)$ to represent that $y$ is the actual class of an input $x$.
As an abbreviation, we write $\psi_\ell(x)$ (resp. $h_\ell(x)$) to denote $\psi(x, \ell)$ (resp. $h(x, \ell)$).
Formally, these static formulas are interpreted at each state $s\in\cals$ as follows:
\begin{align*}
s \models \psi(x, \hy) &~\mbox{ iff }~
C(\sigma_s(x)) = \sigma_s(\hy).
\\
s \models h(x, y) &~\mbox{ iff }~
H(\sigma_s(x)) = \sigma_s(y).
\end{align*}

\begin{figure}[t]
\centering
\begin{picture}(290, 127)
 \put(50,94){\scriptsize State $s_0$}
 \put(172,95){\oval(170,35)}
 \put(155,85){\framebox(35,20)}
 \put(131,105){\tiny input}
 \put(105,94){\scriptsize $\sigma_{s_0}(x)$}
 \put(131,95){\vector(1,0){20}}
 \put(196,105){\tiny output}
 \put(220,94){\scriptsize $\sigma_{s_0}(\hy)$}
 \put(195,95){\vector(1,0){20}}
 \put(159,99){\tiny Classifier}
 \put(168,89){$C$}
 \put(50,44){\scriptsize State $s_1$}
 \put(172,45){\oval(170,35)}
 \put(155,35){\framebox(35,20)}
 \put(131,55){\tiny input}
 \put(105,44){\scriptsize $\sigma_{s_1}(x)$}
 \put(131,45){\vector(1,0){20}}
 \put(196,55){\tiny output}
 \put(220,44){\scriptsize $\sigma_{s_1}(\hy)$}
 \put(195,45){\vector(1,0){20}}
 \put(159,49){\tiny Classifier}
 \put(168,39){$C$}
 \put(60,15){\rotatebox{90}{$\cdots$}}
 \put(170,8){\rotatebox{90}{\footnotesize $\cdots$}}
 \put(0,74){\scriptsize World $w$}
 \linethickness{0.1pt}
 \put(40,1){\dashbox{2}(235,125)}
\end{picture}
\caption{
A world $w$ is chosen non-deterministically.
With probability $w[s_i]$, the world $w$ is in a deterministic state $s_i$ where the classifier $C$ receives the input value $\sigma_{s_i}(x)$ and returns the output value $\sigma_{s_i}(\hy)$.
\label{fig:states}}
\end{figure}
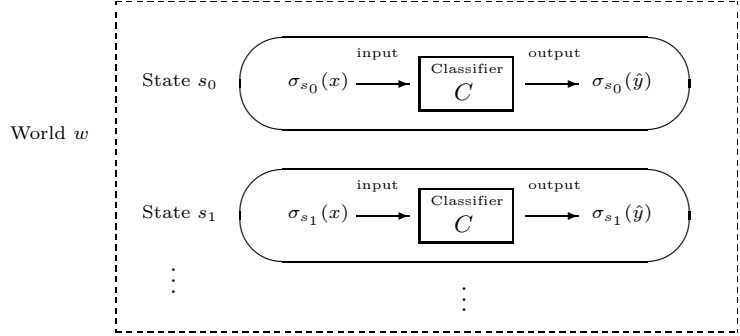

\subsection{Modeling the Non-deterministic Inputs from Adversaries}
\label{sub:ML:non-deterministic-inputs}

As explained in Section~\ref{sub:Kripke},
when a state $s$ is drawn from a distribution $w\in\calw$, an input value $\sigma_s(x)$ is sampled from the distribution $\sigma_w(x)$, and assigned to the measurement variable~$x$.
Since $x$ denotes the input to the classifier $C$, the input distribution $\sigma_w(x)$ over $\cald$ can be regarded as the \emph{test dataset}.
This means that each world $w$ corresponds to a test dataset $\sigma_w(x)$.
For instance, $\sigma_{\wre}(x)$ in the real world $\wre$ represents the actual test dataset.
The set of all possible test datasets (i.e., possible distributions of inputs to $C$) is represented by
$\Lambda \eqdef \left\{ \sigma_w(x) \mid w\in\calw \right\}$.
Note that $\Lambda$ can be an infinite set.

For example, let us consider testing the classifier $C$ with the actual test dataset $\sigma_{\wre}(x)$.
When $C$ assigns a label $\ell$ to an input $x$ with probability $0.2$, i.e.,
$
\Pr\!\left[~ v \randassign \sigma_{\wre}(x) \,:\, 
C(v) = \ell ~\right] = 0.2
$,
then this can be expressed by:
\begin{align*}
\M, \wre \models \PR{0.2} \psi_\ell(x)
{.}
\end{align*}

We can also formalize a non-deterministic input $x$ from an adversary in this model as follows.
Although each state $s$ in a possible world $w$ is assigned the probability $w[s]$, each possible world $w$ itself is not assigned a probability.
Thus, each input distribution $\sigma_w(x) \in \Lambda$ itself is also not assigned a probability, hence our model assumes no probability distribution over $\Lambda$.
In other words, we assume that a world $w$ and thus an adversary's input distribution $\sigma_w(x)$ are non-deterministically chosen.
This is useful to model an adversary's malicious inputs in the definitions of security properties, because we usually do not have a prior knowledge of the distribution of malicious inputs from adversaries, and need to reason about the worst cases caused by the attack.
In Section~\ref{sec:ML:robustness}, this formalization of non-deterministic inputs is used to express the robustness of classifiers.

Finally, it should be noted that we cannot enumerate all possible adversarial inputs, hence cannot construct $\calw$ by collecting their corresponding worlds.
Since $\calw$ can be an infinite set and is unspecified, we do not aim at checking whether or not a formula is satisfied in all possible worlds of $\calw$.
Nevertheless, as shown in later sections, describing various properties using \StatEL{} is useful to explore desirable properties and to discuss relationships among them.

\section{Formalizing the Classification Performance}
\label{sec:ML:performance-prediction}

In this section we show a formalization of classification performance using \StatEL{} (See Fig.~\ref{fig:performance-robustness} for basic ideas).
In classification problems, the terms \emph{positive}/\emph{negative} represent the result of the classifier's prediction, and the terms \emph{true}/\emph{false} represent whether the classifier predicts correctly or not.
Then the following terminologies are commonly used:
\begin{itemize}
\item[($\TP$)\!] \emph{true positive} means both the prediction and actual class are positive;
\item[($\TN$)\!] \emph{true negative} means both the prediction and actual class are negative;
\item[($\FP$)\!] \emph{false positive} means the prediction is positive but the actual class is negative;
\item[($\FN$)\!] \emph{false negative} means the prediction is negative but the actual class is positive.
\end{itemize}
These terminologies can be formalized using \StatEL{} as shown in Table~\ref{table:confusion}.
For example, when an input $x$ shows true positive at a state $s$, this can be expressed as $s \models \psi_\ell(x) \wedge h_\ell(x)$.
True negative, false positive (Type I error), and false negative (Type II error) are respectively expressed as 
$s \models \neg \psi_\ell(x) \wedge \neg h_\ell(x)$,\, 
$s \models \psi_\ell(x) \wedge \neg h_\ell(x)$, and 
$s \models \neg \psi_\ell(x) \wedge h_\ell(x)$.

\begin{table}[t]
  \caption{Logical description of the table of confusion \label{table:confusion}}
  \centering
  \scalebox{0.9}{
  \begin{tabular}{|l|l|l|l|l|} \cline{1-3}
    \multicolumn{1}{|c|}{}& \multicolumn{2}{c|}{Actual class} \\ \cline{2-5}
    \multicolumn{1}{|c|}{}&
    \multicolumn{1}{c|}{positive} & \multicolumn{1}{c|}{negative}
    & ~$\Prevalence_{\ell,I}(x) \eqdef$
    & ~$\Accuracy_{\ell,I}(x) \eqdef$ \\
    \multicolumn{1}{|c|}{}&
    \multicolumn{1}{c|}{$h_\ell(x)$} & \multicolumn{1}{c|}{$\neg h_\ell(x)$}
    & ~$\PR{I}(h_\ell(x))$
    & ~$\PR{I}(\psi_\ell(x) \leftrightarrow h_\ell(x))$\! \\[0.5ex] \hline
    Positive & & & & \\[-1.1ex]
    prediction & ~$\TP(x) \eqdef$ & ~$\FP(x) \eqdef$ 
    & ~$\Precision_{\ell,I}(x) \eqdef$
    & ~$\FDR_{\ell,I}(x) \eqdef$ \\
    $\psi_\ell(x)$ &
    ~$\psi_\ell(x) \wedge h_\ell(x)$ & ~$\psi_\ell(x) \wedge \neg h_\ell(x)$
    & ~$\psi_\ell(x) \erightarrow \PR{I} h_\ell(x)$
    & ~$\psi_\ell(x) \erightarrow \PR{I} \neg h_\ell(x)$ \\ \cline{1-5}
    Negative & & & & \\[-1.1ex]
    prediction & ~$\FN(x) \eqdef$ & ~$\TN(x) \eqdef$ 
    & ~$\FOR_{\ell,I}(x) \eqdef$
    & ~$\NPV_{\ell,I}(x) \eqdef$ \\
    $\neg\psi_\ell(x)$ &
    ~$\neg \psi_\ell(x) \wedge h_\ell(x)$ & ~$\neg \psi_\ell(x) \wedge \neg h_\ell(x)$
    & ~$\neg \psi_\ell(x) \erightarrow \PR{I} h_\ell(x)$
    & ~$\neg \psi_\ell(x) \erightarrow \PR{I} \neg h_\ell(x)$ \\ \hline
    \multicolumn{1}{c|}{} & ~$\Recall_{\ell,I}(x) \eqdef$~ 
    & ~$\FallOut_{\ell,I}(x) \eqdef$~ \\
    \multicolumn{1}{c|}{} & ~$h_\ell(x) \erightarrow \PR{I} \psi_\ell(x)$~~
    & ~$\neg h_\ell(x) \erightarrow \PR{I} \psi_\ell(x)$~~ \\ \cline{2-3}
    \multicolumn{1}{c|}{} & ~$\MissRate_{\ell,I}(x) \eqdef$~ 
    & ~$\Specificity_{\ell,I}(x) \eqdef$~ \\
    \multicolumn{1}{c|}{} & ~$h_\ell(x) \erightarrow \PR{I} \neg\psi_\ell(x)$
    & ~$\neg h_\ell(x) \erightarrow \PR{I} \neg\psi_\ell(x)$ \\ \cline{2-3}
  \end{tabular}
}
\end{table}

Then \emph{precision} (\emph{positive predictive value}) is defined as the conditional probability that the prediction is correct given that the prediction is positive; i.e., 
${\it precision} = \frac{\TP}{\TP + \FP}$.
Since the test dataset distribution in the real world $\wre$ is expressed as $\sigma_{\wre}(x)$ (as explained in Section~\ref{sub:ML:non-deterministic-inputs}),
the precision being within an interval $I$ is given by:
\begin{align*}
\Pr\!\left[~ v \randassign \sigma_{\wre}(x) \,:\, 
H(v) = \ell ~\Big|~ C(v) = \ell ~\right] \in I
{,}
\end{align*}
which can be written as:
\begin{align*}
\Pr\!\left[~ s \randassign \wre \,:\, 
s \models h_\ell(x) ~\Big|~ s \models \psi_\ell(x) ~\right] \in I
{.}
\end{align*}
By using \StatEL{}, this can be formalized as:
\begin{align}
\M, \wre \models \Precision_{\ell,I}(x)
~\mbox{ where }~
\Precision_{\ell,I}(x) \eqdef \psi_\ell(x) \erightarrow \PR{I} h_\ell(x)
{.}
\end{align}
Note that the precision depends on the test data sampled from the distribution $\sigma_{\wre}(x)$, hence on the real world $\wre$ in which we are located.
Hence the measurement variable $x$ in $\Precision_{\ell,I}(x)$ is interpreted using the stochastic assignment $\sigma_{\wre}$ in the world $\wre$.

Symmetrically, \emph{recall} (\emph{true positive rate}) is defined as the conditional probability that the prediction is correct given that the actual class is positive; i.e., 
${\it recall} = \frac{\TP}{\TP + \FN}$.
Then the recall being within $I$ is formalized as:
\begin{align}
\Recall_{\ell,I}(x) \eqdef h_\ell(x) \erightarrow \PR{I} \psi_\ell(x)
{.}
\end{align}
In Table~\ref{table:confusion} we show the formalization of other notions of classification performance using \StatEL{}.

\begin{figure}[t]
\centering
\begin{tikzpicture}
\coordinate (W0) at (-1.5,2.8) node at (W0) [right] {{\scriptsize Real world $\wre$}};
\draw [black, dotted, name path=rectangle, rotate=0] (1.0,2.0) rectangle +(7.1,3.1);
\coordinate (W1) at (-1.5,0.8) node at (W1) [right] {{\scriptsize Possible world $w'$}};
\draw [black, dotted, name path=rectangle, rotate=0] (1.0,0.05) rectangle +(7.1,1.5);
\coordinate (Dataset) at (1.8,3.5) node at (Dataset) {{\scriptsize dataset}};
\filldraw [gray!15] (1.8,2.5) circle [x radius=6mm, y radius=2mm, rotate=0];
\draw [black, very thin] (1.8,2.5) circle [x radius=6mm, y radius=2mm, rotate=0];
\fill [gray!15!white] (1.2,2.5) rectangle (2.4,3);
\filldraw [gray!15] (1.8,3) circle [x radius=6mm, y radius=2mm, rotate=0];
\draw [black, very thin] (1.8,3) circle [x radius=6mm, y radius=2mm, rotate=0];
\draw [black, very thin] (1.2,2.5) -- (1.2,3.0);
\draw [black, very thin] (2.4,2.5) -- (2.4,3.0);
\coordinate (DB) at (1.83,2.6) node at (DB) {$\sigma_{\!\wre}(x)$};
\filldraw [gray!15] (1.8,0.5) circle [x radius=6mm, y radius=2mm, rotate=0];
\draw [black, very thin] (1.8,0.5) circle [x radius=6mm, y radius=2mm, rotate=0];
\fill [gray!15!white] (1.2,0.5) rectangle (2.4,1);
\filldraw [gray!15] (1.8,1) circle [x radius=6mm, y radius=2mm, rotate=0];
\draw [black, very thin] (1.8,1) circle [x radius=6mm, y radius=2mm, rotate=0];
\draw [black, very thin] (1.2,0.5) -- (1.2,1.0);
\draw [black, very thin] (2.4,0.5) -- (2.4,1.0);
\coordinate (DB2) at (1.83,0.6) node at (DB2) {$\sigma_{\!w'}(x)$};
\filldraw [gray!5!white, name path=rectangle, rotate=0] (5.2,3.8) rectangle +(1.5,1);
\draw [name path=rectangle, rotate=0] (5.2,3.8) rectangle +(1.5,1);
\coordinate (HI1) at (4.2,3.1) node at (HI1) {};
\coordinate (HI2) at (5.1,4.3) node at (HI2) {};
\coordinate (HL1) at (5.95,4.45) node at (HL1) [above] {{\scriptsize Oracle}};
\coordinate (HL2) at (5.95,4.17) node at (HL2) [above] {{\tiny (human)}};
\coordinate (H) at (6.2,4.07) node at (H) [left] {$H$};
\coordinate (Hout) at (6.8,4.3) node at (Hout) {};
\coordinate (HO) at (7.4,4.3) node at (HO) [right] {~$\ell$};
\draw [->] (HI1)--(HI2);
\draw [->] (Hout)--(HO);
\coordinate (INPUT1) at (4.25,4.05) node at (INPUT1) [above] {{\tiny input}};
\coordinate (OUTPUT1) at (7.2,4.5) node at (OUTPUT1) [above] {{\tiny output}};
\filldraw [gray!5!white, name path=rectangle, rotate=0] (5.2,2.3) rectangle +(1.5,1);
\draw [name path=rectangle, rotate=0] (5.2,2.3) rectangle +(1.5,1);
\coordinate (SL) at (2.55,2.8) node at (SL) {};
\coordinate (S) at (2.7,3.15) node at (S) {~~~~~~{\tiny sampling}};
\coordinate (SR) at (3.45,2.8) node at (SR) {};
\coordinate (I) at (4.5,2.8) node at (I) [left] {$\sigma_{\!s}(x)$};
\coordinate (CI) at (5.1,2.8) node at (CI) {};
\coordinate (L) at (5.95,2.85) node at (L) [above] {{\scriptsize Classifier}};
\coordinate (C) at (6.2,2.65) node at (C) [left] {$C$};
\coordinate (CO) at (6.8,2.8) node at (CO) {};
\coordinate (O) at (7.4,2.8) node at (O) [right] {~$\ell$};
\draw [->, densely dashed] (SL)--(SR);
\draw [->] (I)--(CI);
\draw [->] (CO)--(O);
\filldraw [gray!5!white, name path=rectangle, rotate=0] (5.2,0.3) rectangle +(1.5,1);
\draw [name path=rectangle, rotate=0] (5.2,0.3) rectangle +(1.5,1);
\coordinate (SL2) at (2.55,0.8) node at (SL2) {};
\coordinate (S2) at (2.7,1.15) node at (S2) {~~~~~~{\tiny sampling}};
\coordinate (SR2) at (3.45,0.8) node at (SR2) {};
\coordinate (I2) at (4.55,0.8) node at (I2) [left] {$\sigma_{\!s'}(x)$};
\coordinate (CI2) at (5.1,0.8) node at (CI2) {};
\coordinate (L2) at (5.95,0.85) node at (L2) [above] {{\scriptsize Classifier}};
\coordinate (C2) at (6.2,0.65) node at (C2) [left] {$C$};
\coordinate (CO2) at (6.8,0.8) node at (CO2) {};
\coordinate (O2) at (7.4,0.8) node at (O2) [right] {~$\ell$};
\draw [->, densely dashed] (SL2)--(SR2);
\draw [->] (I2)--(CI2);
\draw [->] (CO2)--(O2);
\coordinate (CompD0) at (1.9,2.15) node at (CompD0) {};
\coordinate (CompD1) at (1.9,1.35) node at (CompD1) {};
\draw [<->,double] (CompD0)--(CompD1);
\coordinate (CompL0) at (7.7,2.45) node at (CompL0) {};
\coordinate (CompL1) at (7.7,1.1) node at (CompL1) {};
\coordinate (RelationL) at (1.5,1.77) node at (RelationL) {$\Repsd$};
\coordinate (Robust) at (8.6,1.78) node at (Robust) {{\scriptsize Robustness}};
\draw [<->,double] (CompL0)--(CompL1);
\coordinate (CompHC0) at (7.7,3.9) node at (CompHC0) {};
\coordinate (CompHC1) at (7.7,3.25) node at (CompHC1) {};
\coordinate (Performance) at (8.7,3.6) node at (Performance) {{\scriptsize Performance}};
\draw [<->,double] (CompHC0)--(CompHC1);
\end{tikzpicture}
\caption{
The classification performance compares the conditional probability of the human $H$'s output with that by the classifier $C$'s.
On the other hand, the robustness compares the conditional probability in the real world $\wre$ with that in a possible world $w'$ that is close to $\wre$ in terms of $\Repsd$.
Note that an adversary's choice of the test dataset $\sigma_{w'}(x)$ is formalized by the non-deterministic choice of the possible world $w'$.
\label{fig:performance-robustness}}
\end{figure}
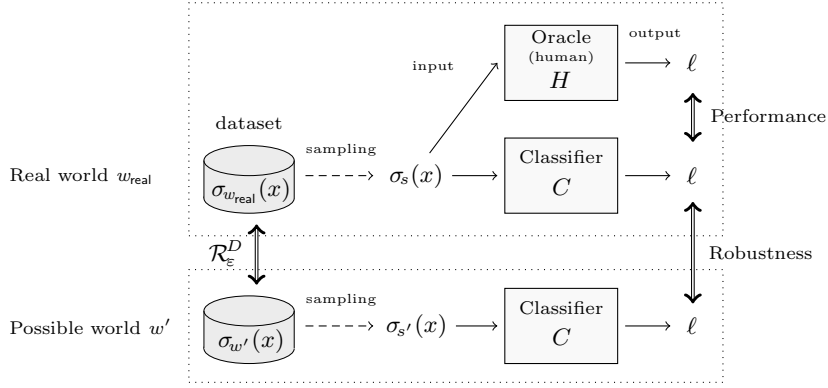

\section{Formalizing the Robustness of Classifiers}
\label{sec:ML:robustness}

Many studies have found attacks on the robustness of statistical machine learning~\cite{Chakraborty:18:arxiv}.
An input data that violates the robustness of classifiers is called an \emph{adversarial example}~\cite{Szegedy:14:ICLR}.
It is designed to make a classifier fail to predict the actual class $\ell$, but is recognized to belong to $\ell$ from human eyes.
For example, in computer vision, Goodfellow et al.~\cite{Goodfellow:ICLR:15} create an image by adding undetectable noise to a panda's photo so that 
humans can still recognize the perturbed image as a panda, but a classifier 
misclassifies it as a gibbon.

In this section we formalize robustness notions for classifiers by using epistemic operators in \StatEL{} (See Fig.~\ref{fig:performance-robustness} for an overview of the formalization).
In addition, we present some relationships between classification performance and robustness, which suggest robustness-related properties that have not been formalized in the literature as far as we know.

\subsection{Total Correctness of Classifiers}
\label{sub:security-classifiers}

We first note that the \emph{total correctness} of classifiers could be formalize as a classification performance (e.g., precision, recall, or accuracy) in the presence of all possible inputs from adversaries.
For example, the total correctness could be formalized as $\M \models \Recall_{\ell,I}(x)$, which represents that $\Recall_{\ell,I}(x)$ is satisfies in all possible worlds of $\M$.

In practice, however, it is not possible or tractable to check whether the classification performance is achieved for all possible dataset and for all possible inputs,
e.g., when $\calw$ is an infinite set.
Hence we need a weaker form of correctness notions, which may be tested in a certain way.
In the following sections, we deal with robustness notions that are weaker than total correctness.

\subsection{Probabilistic Robustness against Targeted Attacks}
\label{sub:target-robustness}

When a robustness attack aims at misclassifying an input as a specific target label, then it is called a \emph{targeted attack}.
For instance, in the above-mentioned attack by~\cite{Goodfellow:ICLR:15}, a gibbon is the target into which a panda's photo is misclassified.

To formalize the robustness, let $\Repsd \subseteq \calw \times \calw$ be an accessibility relation that relates two worlds having closer inputs, i.e.,
\[
\Repsd \eqdef
\left\{ (w, w') \in \calw\times\calw \,\mid\, D(\sigma_{w}(x) \parallel \sigma_{w'}(x)) \le \varepsilon
\right\},
\]
where $D$ is some divergence or distance.
Intuitively, $ (w, w') \in \Repsd$ implies that the two distributions $\sigma_{w}(x)$ and $\sigma_{w'}(x)$ of inputs to the classifier $C$ represent close datasets in terms of~$D$ (e.g., two test datasets consisting of slightly different images that look pandas from the human' eyes). 
Then an epistemic formula $\MKeD \phi$ represents that we are confident that $\phi$ is true as far as the classifier $C$ classifies the test data that are perturbed by noise of a level $\varepsilon$ or smaller\footnote{This usage of modality relies on the fact that the value of the measurement variable $x$ can be different in different possible worlds.}.

Now we discuss how we formalize robustness using the epistemic operator $\MKeD$ as follows.
A first definition of robustness against targeted attacks might be:
\[
\M, \wre \models h_{\sf panda}(x) \erightarrow 
\MKeD \PR{0} \psi_{\sf gibbon}(x),
\]
which represents that a panda's photo $x$ will not be recognized as a gibbon at all after the photo is perturbed by noise.
However, this does not express probability or cover the case where the human cannot recognize the perturbed image as a panda, for example, when the image is perturbed by a transformation such as linear displacement, rescaling and rotation~\cite{Athalye:18:ICML}.
Instead, for some $\delta\in[0, 1]$, we formalize a notion of \emph{probabilistic robustness against targeted attacks} by:
\begin{align*}
\TargetRobust_{{\sf panda}, \delta}(x, {\sf gibbon}) \eqdef
\MKeD \bigl( h_{\sf panda}(x) \erightarrow \PR{[0, \delta]} \psi_{\sf gibbon}(x) \bigr).
\end{align*}

Since $L^p$-norms are often regarded as reasonable approximations of human perceptual distances~\cite{Carlini17:SP}, they are used as distance constraints on the perturbation in many researches on targeted attacks (e.g.~\cite{Szegedy:14:ICLR,Goodfellow:ICLR:15,Carlini17:SP}).
To represent the robustness against these attacks in our model, we should take the metric $D$ 
as the $\infty$-Wasserstein distance $\Winfu$ ( in terms of the $L^p$ metric $d$) between the two distributions $\sigma_{w}(x)$ and $\sigma_{w'}(x)$
\footnote{A coupling that achieves $\Winfu(\sigma_{w}(x), \sigma_{w'}(x)) \le \varepsilon$ provides a transformation of an image in $\supp(\sigma_{w}(x))$ to another in $\supp(\sigma_{w'}(x))$ perturbed by a level $\varepsilon$ of noise.}.

\subsection{Probabilistic Robustness against Non-Targeted Attacks}
\label{sub:total-robustness}

Next we formalize \emph{non-targeted attacks}~\cite{Moosavi:16:CVPR,Madry:18:ICLR} in which adversaries try to misclassify inputs as some arbitrary incorrect labels (i.e., not as a specific label like a gibbon). 
Compared to targeted attacks, this kind of attacks are easier to mount, but harder to defend.

A notion of \emph{probabilistic robustness against non-targeted attacks} can be formalized for some $I = [1-\delta, 1]$ by:
\begin{align}
\TotalRobust_{\ell, I}(x) 
&\eqdef
\MKeD \bigl( h_\ell(x) \erightarrow \PR{I} \psi_\ell(x) \bigr)
= \label{eq:robust-recall}
\MKeD \Recall_{\ell, I}(x)
{.}
\end{align}
Then we derive that $\TotalRobust_{{\sf panda}, I}(x)$ implies $\TargetRobust_{{\sf panda}, \delta}(x, {\sf gibbon})$,
namely, robustness against non-targeted attacks is not weaker than robustness against targeted attacks.

Next we note that by \eqref{eq:robust-recall}, robustness can be regarded as recall in the presence of perturbed noise.
This implies that for each property $\phi$ in Table~\ref{table:confusion}, we could consider $\MKeD \phi$ as a property related to robustness although these have not been formalized in the literature of robustness of machine learning as far as we recognize.
For example, $\MKeD \Precision_{\ell,i}(x)$ represents that in the presence of perturbed noise, the prediction is correct with a probability $i$ given that it is positive.
For another example, $\MKeD \Accuracy_{\ell,i}(x)$ represents that in the presence of perturbed noise, the prediction is correct (whether it is positive or negative) with a probability $i$.

Finally, note that by the reflexivity of $\Repsd$,\, 
$\M, \wre \models \MKeD \Recall_{\ell, I}(x)$ implies $\M, \wre \models \Recall_{\ell, I}(x)$,
i.e., robustness implies recall without perturbation noise.

\section{Formalizing the Fairness of Classifiers}
\label{secML:fairness}

There have been researches on various notions of fairness in machine learning.
In this section, we formalize a few notions of fairness of classifiers by using \StatEL{}.
Here we focus on the fairness that should be maintained in the \emph{impact}, i.e., the results of classification, rather than the \emph{treatment}\footnote{For instance, \emph{fairness through unawareness} requires that protected attributes (e.g., race, religion, or gender) are not explicitly used in the prediction process.
However, \StatEL{} may not be suited to formalizing such a property in treatment.}.

To formalize fairness notions, we use a distributional Kripke model $\M =(\calw, (\calr_a)_{a\in\cala}, \allowbreak (V_s)_{s\in\cals})$ where $\calw$ includes a possible world $w_d$ having a dataset $d$ from which an input to the classifier $C$ is drawn.
Recall that $x$, $y$, and $\hy$ are measurement variables denoting the input to the classifier $C$, the actual class label, and the predicted label by $C$, respectively.
In each world $w$,\, $\sigma_{w}(x)$ is the distribution of $C$'s input over $\cald$, (i.e., the test data distribution),
$\sigma_{w}(y)$ is the distribution of the actual label over $\Label$, and
$\sigma_{w}(\hy)$ is the distribution of $C$'s output over $\Label$.
For each group $G\subseteq\cald$ of inputs, we introduce a static formula $\eta_{G}(x)$ representing that an input $x$ belongs to~$G$.
We also introduce a formula $\xi_d$ representing that all data are drawn from some subset of the dataset $d$.
Formally, these are interpreted by:
\begin{itemize}
\item For each state $s\in\cals$,\, $s\models \eta_{G}(x)$ iff $\sigma_{s}(x) \in G$;
\item For each world $w\in\calw$,\, $w\models \xi_d$ iff 
there exists a $\cals' \subseteq\cals$ s.t. 
$w[s] = \frac{w_d[s]}{\sum_{s' \in \cals'} w_d[s']}$ if $s \in \cals'$, and $w[s] = 0$ otherwise.
\end{itemize}
For two worlds $w$ and $w'$, we write $w \models \Qw{w'} \psi$ to denote that $w \models \PR{1} \psi$ and $s \not\models \psi$ for all $s\in\supp(w')\setminus\supp(w)$.

Then we obtain the following proposition on conditional indistinguishability.

\begin{restatable}[Conditional indistinguishability in a world $w_d$]{prop}{CondIndLocal}
\label{prop:CondIndLocal}
Let $\M =(\calw, (\calr_a)_{a\in\cala}, \allowbreak (V_s)_{s\in\cals})$ 
be a distributional Kripke model with the universe $\calw = \Dists\cals$.
Let $w_d$ be a world with a dataset $d$,
$\psi$ and $\psi'$ be static formulas, and $a\in\cala$.
\begin{enumerate}
\item[(i)] 
$\M, w_d \models \psi \erightarrow \neg \ov{\MPa } \bigl( \xi_d \wedge \Qw{w_d} \psi' \bigr)$
iff for any $w, w' \in \calw$,\, 
$\M, w \models \xi_d \wedge \Qw{w_d} \psi$ and $\M, w' \models \xi_d \wedge \Qw{w_d} \psi'$
imply $(w, w') \in \Ra$.
\item[(ii)] 
If $\calr_a$ is symmetric, then $\M, w_d \models \psi \erightarrow \neg \ov{\MPa }\Qw{w_d} \psi'$ iff $\M, w_d \models \psi' \erightarrow \neg \ov{\MPa }\Qw{w_d} \psi$.
\end{enumerate}
\end{restatable}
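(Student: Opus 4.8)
The statement is the ``localized'' counterpart of Proposition~\ref{prop:CondInd}, so the plan is to mirror that proof, the new ingredient being a \emph{collapse lemma}: in the universe $\calw=\Dists\cals$, the conjunction $\xi_d\wedge\Qw{w_d}\psi$ is satisfied by exactly one world, namely $w_d|_\psi$, whenever $w_d|_\psi$ is defined, and by no world otherwise. To prove this I would unfold $\M,w\models\xi_d\wedge\Qw{w_d}\psi$ into three constraints on $w$: from $\xi_d$, that $w$ is $w_d$ renormalized on some $\cals'\subseteq\cals$, so $\supp(w)=\cals'\cap\supp(w_d)$ and $w$ is proportional to $w_d$ on $\supp(w)$; from $\M,w\models\PR{1}\psi$, that $\supp(w)\subseteq\{s:s\models\psi\}$; and from the side-condition built into $\Qw{w_d}$, that every $s\in\supp(w_d)$ with $s\models\psi$ already lies in $\supp(w)$. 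Intersecting these forces $\supp(w)=\supp(w_d)\cap\{s:s\models\psi\}$, and combined with proportionality this determines $w$ uniquely as $w_d|_\psi$; conversely $w_d|_\psi$, when defined, meets all three constraints. In particular a world satisfying $\xi_d\wedge\Qw{w_d}\psi$ exists iff $w_d|_\psi$ is defined.

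For part (i), I would unfold the left-hand side with the stochastic semantics: $\M,w_d\models\psi\erightarrow\neg\ov{\MPa}(\xi_d\wedge\Qw{w_d}\psi')$ holds iff $w_d|_\psi$ is defined and $\M,w_d|_\psi\models\neg\ov{\MPa}(\xi_d\wedge\Qw{w_d}\psi')$, and by \eqref{eq:counterfactual3} the latter says that every world $w'$ with $\M,w'\models\xi_d\wedge\Qw{w_d}\psi'$ satisfies $(w_d|_\psi,w')\in\Ra$. Now I apply the collapse lemma to $\psi$ and to $\psi'$: the only worlds meeting the two conjunctions are $w_d|_\psi$ and $w_d|_{\psi'}$ (when defined). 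For the forward direction I instantiate the right-hand side of (i) at $w=w_d|_\psi$ and $w'=w_d|_{\psi'}$, which are admissible by the collapse lemma, and read off $(w_d|_\psi,w_d|_{\psi'})\in\Ra$; for the converse, any $w,w'$ verifying the premises of the right-hand side must equal $w_d|_\psi$ and $w_d|_{\psi'}$, so $(w,w')\in\Ra$ is exactly the content of the unfolded left-hand side. The only care needed is the degenerate case in which $\psi$ (or $\psi'$) holds with probability $0$ in $w_d$: then $w_d|_\psi$ is undefined, which is already excluded by the $\erightarrow$-semantics on the left, so one states the equivalence under this mild nondegeneracy.

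For part (ii), assuming $\calr_a$ symmetric, I would run the same unfolding on both formulas. By the collapse lemma the worlds that can serve as witnesses for the $\ov{\MPa}$-operator are again $w_d|_\psi$ and $w_d|_{\psi'}$, so each side reduces to a condition on the ordered pair $(w_d|_\psi,w_d|_{\psi'})$, and symmetry of $\Ra$ makes that condition insensitive to the order, hence to swapping $\psi$ and $\psi'$. A small auxiliary fact feeding this argument is that $w_d|_\psi$ itself satisfies $\Qw{w_d}\psi$ (indeed $\xi_d\wedge\Qw{w_d}\psi$), so it is a legitimate witness on the $\ov{\MPa}$-side of the swapped formula; this, together with the collapse lemma's uniqueness clause, is what keeps the witness bookkeeping finite.

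The main obstacle is the collapse lemma — specifically, noticing that the technical-looking side-condition in $\Qw{w'}$ is exactly what strengthens ``$\supp(w)\subseteq\{s:s\models\psi\}$'' to the equality ``$\supp(w)=\supp(w_d)\cap\{s:s\models\psi\}$'', which is what yields a \emph{unique} witness world and thereby collapses the semantic unfolding to a single relational atom $(w_d|_\psi,w_d|_{\psi'})\in\Ra$. Establishing that equality carefully, and disposing of the boundary case where the restriction is undefined, is where essentially all the work lies; the remainder is the same bookkeeping as in the proof of Proposition~\ref{prop:CondInd}.
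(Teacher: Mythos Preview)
Your approach is essentially the same as the paper's. The paper's proof uses exactly your ``collapse lemma'' --- it simply asserts that from $\M, w \models \xi_d \wedge \Qw{w_d}\psi$ one has $w_d|_\psi = w$ (and likewise for $\psi'$), without isolating this as a separate step or justifying it --- and then proceeds with the same semantic unfolding of $\erightarrow$ and $\ov{\MPa}$ that you describe. Your proposal is more explicit in two respects: you actually argue why the three constraints from $\xi_d$, $\PR{1}\psi$, and the side-condition in $\Qw{w_d}$ force the support equality and hence uniqueness, and you flag the degenerate case where $w_d|_\psi$ is undefined, which the paper's proof passes over in silence. One small caution: in part~(ii) the formula inside $\ov{\MPa}$ is $\Qw{w_d}\psi'$ rather than $\xi_d\wedge\Qw{w_d}\psi'$, so your collapse lemma as stated does not literally cut the witness set down to a singleton there; the paper is equally terse on this point (``follows from (i) immediately''), so you are not missing anything the paper supplies, but you should be aware that the bookkeeping for (ii) is not a verbatim repeat of (i).
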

See Appendix~\ref{sec:proof:prop:CondInd} for the proof.

Now we formalize three popular notions of fairness of classifiers by using counterfactual epistemic operators (introduced in Section~\ref{sec:counterfactual}) as follows.

\subsection{Group Fairness (Statistical Parity)}
\label{sub:group-fairness}
The \emph{group fairness} formulated as \emph{statistical parity}~\cite{Dwork:12:ITCS} is the property that the output distributions of the classifier are identical for different groups.
Formally, for each $b = 0, 1$ and a group $G_b \subseteq \cald$, let $\mu_{G_b}$ be the distribution of the output (over $\Label$) of the classifier $C$ when the input is sampled from a dataset $d$ and belongs to $G_b$.
Then the statistical parity up to bias $\varepsilon$ is formalized using the total variation $\Dtv$ by
$\Dtv( \mu_{G_0} \| \mu_{G_1} ) \leq \varepsilon$.

To express this using \StatEL{}, 
we define an accessibility relation $\Repstv$ in $\M$~by: 
\begin{align}\label{eq:Repstv}
\Repstv &\eqdef 
\left\{ (w, w') \in \calw\times\calw \mid 
\tvdiverge{\sigma_{w}(\hy)\!}{\!\sigma_{w'}(\hy)} 
\le \varepsilon
\right\}
{.}
\end{align}
Intuitively, $(w, w') \in \Repstv$ represents that the two probability distributions $\sigma_{w}(\hy)$ and $\sigma_{w'}(\hy)$ of the outputs by the classifier $C$ respectively in $w$ and in $w'$ are close in terms of $\Dtv$.
Note that $\sigma_{w}(\hy)$ and $\sigma_{w'}(\hy)$ respectively represent $\mu_{G_0}$ and $\mu_{G_1}$.

Then the statistical parity w.r.t. groups $G_0, G_1$ means that in terms of $\Repstv$, we cannot distinguish a world having a dataset $d$ and satisfying $\eta_{G_0}(x) \wedge \psi(x, \hy)$ from another satisfying $\eta_{G_1}(x) \wedge \psi(x, \hy)$.
By Proposition~\ref{prop:CondIndLocal}, this is expressed as:
\[
\M, w_d \models \GrpFair(x, \hy)
\]
where
$
\GrpFair(x, \hy) \eqdef
\bigl( \eta_{G_0}(x) \wedge \psi(x, \hy) \bigr) \erightarrow
\neg \ov{\MPetv} \bigl( \xi_d \wedge \Qw{w_d} (\eta_{G_1}(x) \wedge \psi(x, \hy)) \bigr)
$.

\subsection{Individual Fairness (as Lipschitz Property)}
\label{sub:individual-fairness}
The \emph{individual fairness} formulated as a Lipschitz property~\cite{Dwork:12:ITCS} is the property that the classifier outputs similar labels given similar inputs.
Formally, for $v, v' \in \cald$, let $\mu_{v}$ and $\mu_{v'}$ be the distributions of the outputs (over $\Label$) of the classifier $C$ when the inputs are $v$ and $v'$, respectively.
Then the individual fairness is formalized using a divergence $D: \Dists\Label\times\Dists\Label \rightarrow \realsnng$, a metric $r: \cald\times\cald \rightarrow \realsnng$, and a threshold $\varepsilon\in\realsnng$ by
$\diverge{\mu_{v}}{\mu_{v'}} \leq \varepsilon\cdot r(v, v')$.

To express this using \StatEL{}, we define an accessibility relation $\Repslps$ in $\M$ for the metric $r$ and the divergence $D$ as follows:
\begin{align}\label{eq:Repslps}
\Repslps &\eqdef 
\left\{ (w, w') \in \calw\times\calw ~\Big|~ 
\begin{array}{ll}
v \in\supp(\sigma_{w}(x)),~
v' \in\supp(\sigma_{w'}(x)), \\
\diverge{\sigma_{w}(\hy)\!}{\!\sigma_{w'}(\hy)} 
\le \varepsilon \cdot 
r(v, v')
\end{array}
\right\}
{.}
\end{align}
Intuitively, $(w, w') \in \Repslps$ represents that, when inputs are closer in terms of the metric $r$, the classifier $C$ outputs closer labels in terms of the divergence $D$.

Then the individual fairness w.r.t. $r$ and $D$ means that in terms of $\Repslps$, we cannot distinguish between the two worlds $w$ and $w'$ where $\psi(x, \hy)$ is satisfied (i.e., $C$ outputs $\hy$ given an input $x$).
By Proposition~\ref{prop:CondIndLocal}, this is expressed as:
\[
\M, w_d \models \IndFair(x, \hy)
\]
where
$\IndFair(x, \hy) \eqdef
\psi(x, \hy) \erightarrow
\neg \ov{\MPelps} \bigl( \xi_d \wedge \Qw{w_d} \psi(x, \hy) \bigr)$.

This represents that by observing the classifier's output $\hy$, we can less distinguish two worlds $w$ and $w'$ 
when their inputs $\sigma_{w}(x)$ and $\sigma_{w'}(x)$ are closer.

\subsection{Equal Opportunity}
\label{sub:equal-opportunity}

\emph{Equal opportunity}~\cite{Hardt:16:NIPS,Zafar:17:WWW} is the property that the recall (true positive rate) is the same for all the groups.
Formally, given an advantage class $\ell\in\Label$ (e.g., not defaulting on a loan) and a group $G\subseteq\cald$ of inputs with a protected attribute (e.g., race), a classifier $C$ is said to satisfy equal opportunity of $\ell$ w.r.t. $G$ 
if it holds for each $\hell\in\Label$ that:
\begin{align}\label{eq:equal-opportunity}
\Pr[ C(x) = \hell \mid x\in G,\, H(x) = \ell ] =
\Pr[ C(x) = \hell \mid x\in \cald\setminus\!G,\, H(x) = \ell ].
\end{align}

If we allow the logic to use the universal quantification over the probability value $i$, then 
the case of $\hell = \ell$ in \eqref{eq:equal-opportunity} could be expressed as:
\[
\forall i \in [0, 1].~
\bigl(
\xi_d \wedge \eta_{G}(x) \erightarrow \Recall_{\ell,i}(x)
\bigr)
\leftrightarrow
\bigl(
\xi_d \wedge \neg\eta_{G}(x) \erightarrow \Recall_{\ell,i}(x)
\bigr)
{.}
\]
However, instead of allowing for this universal quantification, 
we can use the modal operators $\ov{\MPetv}$ (defined by~\eqref{eq:Repstv}) with $\varepsilon = 0$,
and represent equal opportunity as the fact that we cannot distinguish a world having a dataset $d$ and satisfying $\eta_{G}(x) \wedge \psi(x, \hy) \wedge h_{\ell}(x)$ from another satisfying $\neg\eta_{G}(x) \wedge \psi(x, \hy) \wedge h_{\ell}(x)$ as follows:
\[
\EqOpp(x, \hy) \!\eqdef\!
\bigl( \eta_{G}(x) \wedge \psi(x, \hy) \wedge h_{\ell}(x) \bigr) \erightarrow
\neg \ov{\MPztv} \bigl( \xi_d \wedge \Qw{w_d} (\neg\eta_{G}(x) \wedge \psi(x, \hy) \wedge h_{\ell}(x)) \bigr)
{.}
\]

\section{Related Work}
\label{sec:related}
In this section, we provide a brief overview of related work on the specification of statistical machine learning and on epistemic logic for describing specification.

\paragraph{Desirable properties of statistical machine learning.}

There have been a large number of papers on attacks and defences for deep neural networks~\cite{Szegedy:14:ICLR,Chakraborty:18:arxiv}.
Compared to them, however, not much work has been done to explore the formal specification of various properties of machine learning.
Seshia et al.~\cite{Seshia:18:ATVA} present a list of desirable properties of DNNs (deep neural networks) although most of the properties are presented informally without mathematical formulas.
As for robustness, Dreossi et al.~\cite{Dreossi:19:VNN}
propose a unifying formalization of adversarial input generation in a rigorous and organized manner, although they formalize and classify attacks (as optimization problems) rather than define the robustness notions themselves.
Concerning the fairness notions, 
Gajane~\cite{Gajane:17:arxiv} surveys the formalization of fairness notions for machine learning and present some justification based on social science literature.

\paragraph{Epistemic logic for describing specification.}
Epistemic logic~\cite{vonWright:51:book} has been studied to represent and reason about knowledge~\cite{Fagin:95:book,Halpern:03:book,Halpern:05:JCS}, and has been applied to describe various properties of systems.

The \emph{BAN logic}~\cite{Burrows:90:TOCS}, proposed by Burrows, Abadi and Needham, is a notable example of epistemic logic used to model and verify the authentication in cryptographic protocols.
To improve the formalization of protocols' behaviours, some epistemic approaches integrate process calculi~\cite{Hughes:04:JCS,Dechesne:07:LPAR,Chadha:09:Forte}.

Epistemic logic has also been used to formalize and reason about privacy properties, including anonymity~\cite{Syverson:99:FM,Halpern:05:JCS,Garcia:05:FMSE,Kawamoto:07:JSIAM},
receipt-freeness of electronic voting protocols~\cite{Jonker:06:WOTE},
and privacy policy for social network services~\cite{Pardo:14:SEFM}.
Temporal epistemic logic is used to express information flow security policies~\cite{Balliu:11:PLAS}.

Concerning the formalization of fairness notions, previous work in formal methods has modeled different kinds of fairness involving timing by using temporal logic rather than epistemic logic.
As far as we know, no previous work has formalized fairness notions of machine learning using counterfactual epistemic operators.

\paragraph{Formalization of statistical properties.}

In studies of philosophical logic, Lewis~\cite{Lewis:80:subjectivist} shows the idea that when a random value has various possible probability distributions, then those distributions should be represented on distinct possible worlds.
Bana~\cite{Bana:17:EPSP} puts Lewis's idea in a mathematically rigorous setting. 
Recently, a modal logic called statistical epistemic logic~\cite{Kawamoto:19:FC:arxiv} is proposed and is used to formalize statistical hypothesis testing and the notion of differential privacy~\cite{Dwork:06:ICALP}.
Independently of that work, French et al.~\cite{French:19:AAAMAS} propose a probability model for a dynamic epistemic logic in which each world is associated with a subjective probability distribution over the universe, without dealing with non-deterministic inputs or statistical divergence.

\section{Conclusion}
\label{sec:conclude}
We have shown a logical approach to formalizing statistical classifiers and their desirable properties in a simple and abstract way.
Specifically, we have introduced a formal model for probabilistic behaviours of classifiers and non-deterministic adversarial inputs using a distributional Kripke model.
Then we have formalized the classification performance, robustness, and fairness of classifiers by using \StatEL{}.
Moreover, we have also clarified some relationships among properties of classifiers, and relevance between classification performance and robustness.
To formalize fairness notions, we have introduced a notion of counterfactual knowledge and shown some techniques to express conditional indistinguishability.
As far as we know, this is the first work that uses logical formulas to express statistical properties of machine learning, and that provides epistemic (resp. counterfactually epistemic) views on robustness (resp. fairness) of classifiers.

In future work, we are planning to include temporal operators in the specification language and to formally reason about system-level properties of learning-based systems.
We are also interested in developing a general framework for the formal specification of machine learning associated with testing methods and possibly extended with Bayesian networks.
Our future work also includes an extension of \StatEL{} to formalize machine learning other than classification problems.
Another possible direction of future work would be to clarify the relationships between our counterfactual epistemic operators and more general notions of counterfactual knowledge in previous work such as~\cite{Williamson:07:GPS}.

\appendix
\section{Proofs for Propositions~\ref{prop:CondInd} and~\ref{prop:CondIndLocal}}
\label{sec:proof:prop:CondInd}

\CondInd*
\begin{proof}
We first prove the claim (i) as follows.
We show the direction from left to right.
Assume that $\M \models \psi \erightarrow \neg \ov{\MPa }\PR{1} \psi'$.
Let $w, w' \in \calw$ satisfy $\M, w \models \PR{1} \psi$ and $\M, w' \models \PR{1} \psi'$.
Then 
$w|_{\psi} = w$.
By $\M, w \models  \psi \erightarrow \neg \ov{\MPa }\PR{1} \psi'$, 
we obtain $\M, w|_{\psi} \models \neg\ov{\MPa} \PR{1} \psi'$,
which is logically equivalent to $\M, w|_{\psi} \models \ov{\MKa} \neg \PR{1} \psi'$.
By the definition of $\ov{\MKa}$, for every $w''\in\calw$,\, $\M, w'' \models \PR{1} \psi'$ implies $(w|_{\psi}, w'')\in\Ra$.
Then, since $w|_{\psi} = w$ and $\M, w' \models \PR{1} \psi'$, we obtain $(w, w')\in\Ra$.

Next we show the other direction as follows.
Assume the right hand side.
Let $w \in \calw$ such that $\M, w \models \PR{1} \psi$.
Then for every $w' \in \calw$,\, $\M, w' \models \PR{1} \psi'$ implies $(w, w') \in \Ra$.
By the definition of $\ov{\MKa}$, we have $\M, w \models \ov{\MKa} \neg \PR{1} \psi'$, which is equivalent to $\M, w \models \neg\ov{\MPa} \PR{1} \psi'$.
By $\M, w \models \PR{1} \psi$, we have $w|_{\psi} = w$, hence $\M, w|_{\psi} \models \neg\ov{\MPa} \PR{1} \psi'$.
Therefore $\M, w \models  \psi \erightarrow \neg \ov{\MPa }\PR{1} \psi'$.

Finally, the claim (ii) follows from the claim (i) immediately.
\qed
\end{proof}

\CondIndLocal*
\begin{proof}
We first prove the claim (i) as follows.
We show the direction from left to right.
Assume that $\M, w_d \models \psi \erightarrow \neg \ov{\MPa }\bigl( \xi_d \wedge \Qw{w_d} \psi' \bigr)$.
Let $w, w' \in \calw$ satisfy $\M, w \models \xi_d \wedge \Qw{w_d} \psi$ and $\M, w' \models \xi_d \wedge \Qw{w_d} \psi'$.
Then $w_d|_{\psi} = w$ and $w_d|_{\psi'} = w'$.
By $\M, w_d \models  \psi \erightarrow \neg \ov{\MPa } \bigl( \xi_d \wedge \Qw{w_d} \psi' \bigr)$, 
we obtain $\M, w_d|_{\psi} \models \neg\ov{\MPa} \bigl( \xi_d \wedge \Qw{w_d} \psi' \bigr)$,
which is logically equivalent to $\M, w_d|_{\psi} \models \ov{\MKa} \neg \bigl( \xi_d \wedge \Qw{w_d} \psi' \bigr)$.
By the definition of $\ov{\MKa}$ and $\M, w' \models \xi_d \wedge \Qw{w_d} \psi'$,
we have $(w_d|_{\psi}, w')\in\Ra$.
Therefore, by $w|_{\psi} = w$, we obtain $(w, w')\in\Ra$.

Next we show the other direction as follows.
Assume the right hand side.
Let $w \in \calw$ such that $\M, w \models \xi_d \wedge \Qw{w_d} \psi$.
Then for every $w' \in \calw$,\, $\M, w' \models \xi_d \wedge \Qw{w_d} \psi'$ implies $(w, w') \in \Ra$.
By the definition of $\ov{\MKa}$, we have $\M, w \models \ov{\MKa} \neg \bigl(\xi_d \wedge \Qw{w_d} \psi' \bigr)$, which is equivalent to $\M, w \models \neg\ov{\MPa} \bigl(\xi_d \wedge \Qw{w_d} \psi'\bigr)$.
By $\M, w \models \xi_d \wedge \Qw{w_d} \psi$, we have $w_d|_{\psi} = w$, hence $\M, w_d|_{\psi} \models \neg\ov{\MPa} \bigl( \xi_d \wedge \Qw{w_d} \psi' \bigr)$.
Therefore $\M, w_d \models  \psi \erightarrow \neg \ov{\MPa } \bigl( \xi_d \wedge \Qw{w_d} \psi' \bigr)$.

Finally, the claim (ii) follows from the claim (i) immediately.
\qed
\end{proof}

 \bibliographystyle{splncs04}
 \bibliography{short,short-ML}

\end{document}